\documentclass[12pt,english]{article}
\usepackage{mathpazo}
\usepackage[T1]{fontenc}
\usepackage[latin9]{inputenc}
\usepackage{geometry}
\geometry{verbose,tmargin=3cm,bmargin=3cm,lmargin=3cm,rmargin=3cm}
\pagestyle{plain}
\usepackage{amsmath}
\usepackage{amsthm}
\usepackage{amssymb}
\usepackage{stmaryrd}
\usepackage{graphicx}

\makeatletter

\providecommand{\tabularnewline}{\\}

\numberwithin{figure}{section}
\theoremstyle{definition}
\newtheorem{defn}{\protect\definitionname}
\theoremstyle{plain}
\newtheorem{thm}{\protect\theoremname}
\theoremstyle{plain}
\newtheorem{lem}{\protect\lemmaname}
\theoremstyle{remark}
\newtheorem*{acknowledgement*}{\protect\acknowledgementname}


\usepackage{mathrsfs}
\usepackage{amsmath}
\usepackage{mathtools}
\usepackage{braket}
\usepackage{upgreek}
\DeclareMathOperator{\Tr}{Tr}
\DeclareMathOperator{\I}{I}
\DeclareMathOperator{\zero}{O}
\DeclareMathOperator{\Span}{span}
\usepackage{enumitem}
\setenumerate{label=(\roman*)}

\renewcommand{\sigma}{\upsigma}

\makeatother

\usepackage{babel}
\providecommand{\acknowledgementname}{Acknowledgement}
\providecommand{\definitionname}{Definition}
\providecommand{\lemmaname}{Lemma}
\providecommand{\theoremname}{Theorem}

\begin{document}

\title{A Gleason-type theorem for qubits based on mixtures of projective
measurements}

\author{Victoria J Wright and Stefan Weigert\\
 Department of Mathematics, University of York\\
 York YO10 5DD, United Kingdom\\
 \texttt{\small{}vw550@york.ac.uk, stefan.weigert@york.ac.uk}}

\date{August 2018}
\maketitle
\begin{abstract}
We derive Born's rule and the density-operator formalism for quantum
systems with Hilbert spaces of dimension two or larger. Our extension
of Gleason's theorem only relies upon the consistent assignment of
 probabilities to the outcomes of projective measurements and their
classical mixtures. This assumption is significantly weaker than those
required for existing Gleason-type theorems valid in dimension two. 
\end{abstract}
\global\long\def\kb#1#2{|#1\rangle\langle#2|}

\global\long\def\bk#1#2{\langle#1|#2\rangle}

\global\long\def\braket#1#2{\langle#1|#2\rangle}

\global\long\def\ket#1{|#1\rangle}

\global\long\def\bra#1{\langle#1|}

\global\long\def\cd{\mathbb{C}^{d}}

\section{Introduction}

Formulations of quantum theory typically introduce at least three
postulates to define \emph{quantum states} and \emph{observables}
on the one hand, and to explain how they give rise to \emph{measurable
quantities} such as expectation values on the other. One way to set
up the necessary machinery (cf. \cite{Neumann1932}, for example)
consists of postulating that (i) the states\emph{ }of a quantum system
correspond to density operators on a separable, complex Hilbert space
${\cal H}$; (ii) measurements\emph{ }of quantum observables are associated
with collections of mutually orthogonal projection operators acting
on the space ${\cal H}$; (iii) the probabilities of measurement outcomes
are given by Born's rule.\footnote{Normally, these axioms are supplemented by a \emph{measurement postulate}
identifying the post-measurement state once a specific outcome has
been obtained, by a \emph{dynamical law}, and by a rule how to describe
composite quantum systems. Substantially different axiomatic formulations
of quantum theory have been proposed in e.g. \cite{Hardy2001a,Masanes2011}.} In 1957, Gleason \cite{Gleason1957} showed that, assuming the second
postulate, the other two can be seen as a consequence of a quantum
state's most fundamental purpose, that is to assign probabilities
to all measurement outcomes in a consistent way. 

There is, however, a fly in the ointment: Gleason's result only holds
for Hilbert spaces with dimension greater than two. In a two-dimensional
space, the requirement of consistency places no restriction on the
probabilities that may be assigned to non-orthogonal projections.
The resulting surfeit of consistent probability assignments is then
too large to be identified with the set of density operators on $\mathbb{C}^{2}$.
Hence the question: what modification of the assumptions would be
sufficient to recover the probabilistic structure characteristic of
quantum theory in a two-dimensional Hilbert space?

Enter \emph{Gleason-type} theorems, which are designed to fill this
gap. In 2003, the Born rule for Hilbert spaces of dimension two (or
greater) was shown \cite{Busch2003,Caves2004} to follow from extending
Gleason's idea from projection-valued measures (PVMs) to the more
general class of positive operator-valued measures (POMs).

The set of POMs encompasses all quantum\emph{ }measurements, with
PVMs being only a small subset thereof. It is, therefore, natural
to ask whether there are sets ``between'' PVMs and POMs from which
it is possible to derive a Gleason-type theorem. A first step into
this direction was made in 2006 when \emph{three-outcome }POMs were
shown to be sufficient for this purpose in the spaces $\mathbb{C}^{d}$,
with $d\geq2$ \cite{mastersthesis}. Our contribution will take this
reduction even further. We will show that it is possible to derive
a Gleason-type theorem upon extending Gleason's probability assignments
from PVMs to their convex combinations. The resulting \emph{projective-simulable}
measurements \cite{Oszmaniec2017} represent a particularly simple
subset of POMs. 

In Sec. \ref{sec:Gleason-Type-theorems}, we set up our notation and
express Gleason's theorem in a form which is suitable for direct comparison
with Gleason-type theorems. Sec. \ref{sec:main result} describes
projective-simulable POMs in order to derive a Gleason-type theorem
based on assumptions weaker than those currently known. In the final
section, we summarize and discuss our results.

\section{Known extensions of Gleason's theorem \label{sec:Gleason-Type-theorems}}

In this section we review Gleason's theorem and express it in a form
which will allow for easy comparison with later variants, including
our main result. Let us introduce a number of relevant concepts and
establish our notation. 

\subsection{Preliminaries\label{subsec:Preliminaries}}

Let $\mathcal{H}$ be a finite-dimensional, complex Hilbert space.
An \emph{effect} is an operator on $\mathcal{H}$ occurring in the
range of a POM. More explicitly, an effect $e$ is Hermitian and satisfies
$\zero\leq e\leq\I$, where $\zero$ and $\I$ are the zero and identity
operators on $\mathcal{H}$, respectively. Here, the ordering of two
operators, $A\leq B$, say, is defined to hold if the inequality $\bra{\psi}A\ket{\psi}\leq\bra{\psi}B\ket{\psi}$
is satisfied for all elements of the Hilbert space, $\ket{\psi}\in\mathcal{H}$.
We denote the set of all effects $e$ on the space $\mathbb{C}^{d}$
by $\mathcal{E}(\mathbb{C}^{d})\equiv{\cal E}_{d}$.

It is instructive to visualize the effect space of a qubit. The Pauli
operators $\sigma_{x},\sigma_{y}$ and $\sigma_{z}$ with the identity
$\I$ form a basis of Hermitian operators acting on $\mathbb{C}^{2}$.
Hence, any qubit effect takes the form
\begin{equation}
e=a\I+b\sigma_{x}+c\sigma_{y}+d\sigma_{z}\in{\cal E}_{2}\,,\label{eq: effects in C^2}
\end{equation}
where the range of the four parameters $a,\ldots,d\in\mathbb{R}$,
is restricted by the requirement that the operator $e$ must satisfy
$\zero\leq e\leq\I$. 

Fig. \ref{fig: effect space and measurements P and T} illustrates
three-dimensional cross sections of the four-dimensional effect space
obtained upon suppressing the $y$-component in (\ref{eq: effects in C^2}).
The points on the circle in the $xz$-plane correspond to rank-1 projection
operators and are, in addition to $\zero$ and $\I$, the only \emph{extremal}
effects in the sense that they cannot be obtained as convex combinations
of other effects.

\begin{figure}
\begin{centering}
\includegraphics[scale=0.8]{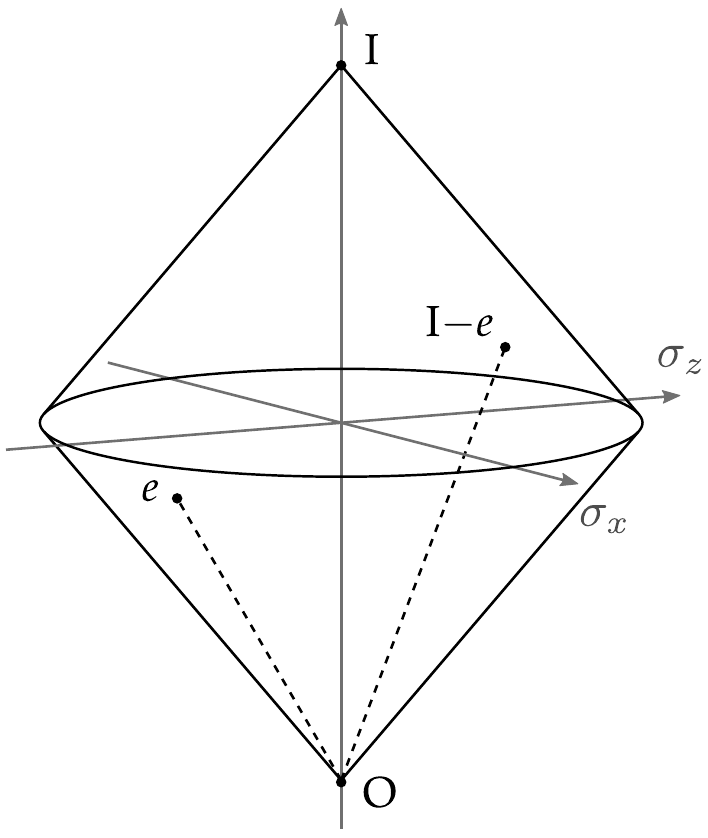}
\par\end{centering}
\caption{\label{fig: effect space and measurements P and T}Three-dimensional
cross section of the four-dimensional qubit effect space, illustrating
a generic two-outcome measurement $\mathbb{D}_{e}$ in (\ref{eq: D_e}),
characterized by the effect $e$.}
\end{figure}

We now turn to the description of measurement outcomes. A fundamental
assumption is that the outcome of any measurement performed on a quantum
system can be associated with some effect $e\in\mathcal{E}_{d}$ \cite{kraus1983states}.
More specifically, we think of a \emph{measurement} $\mathbb{M}$
as an ordered sequence of effects
\begin{equation}
\mathbb{M}=\left\llbracket e_{1},e_{2},\ldots,e_{n}\right\rrbracket ,\quad n\in\mathbb{N}\,,\label{eq: measurement =00003D effect sequence}
\end{equation}
where
\begin{equation}
\sum_{j=1}^{n}e_{j}=\I\,.\label{eq: completeness of effects}
\end{equation}
We say the effect $e_{j}$ is associated with the $j$-th outcome
of the measurement. It is useful to note that measurements with $n$
outcomes $\mathbb{M}=\left\llbracket e_{1},e_{2},\ldots,e_{n}\right\rrbracket $
can be thought of as \emph{vectors }since they are elements of the
vector space formed by the Cartesian product of $n$ copies of the
real vector space of Hermitian operators on $\mathcal{H}$. Hence,
real linear combinations of measurements are well-defined from a mathematical
point of view, though only convex combinations necessarily correspond
to measurements. 

Next, we introduce the concept of a \emph{measurement set} $\mathbf{M}=\left\{ \mathbb{M}_{j},j\in J\right\} $,
for some (possibly uncountable) indexing set $J$, simply consisting
of a collection of selected measurements $\mathbb{M}_{j}$. The set
$\mathbf{M}$ is said to define a particular \emph{measurement scenario}
if we consider (only) the measurements contained in $\mathbf{M}$
to be realisable. The measurement set $\mathbf{PVM}_{d}$ in a Hilbert
space of dimension $d\geq2$, for example, collects all projective
measurements; it thus consists of all measurements of the form $\left\llbracket P_{1},P_{2},\ldots\right\rrbracket $,
with at most $d$ distinct projection operators $P_{j}$ on $\mathbb{C}^{d}$,
i.e. effects satisfying $P_{j}^{2}=P_{j}$. The set $\mathbf{PVM}_{d}$
defines the von Neumann measurement scenario. 

The \emph{effect space} $\mathcal{E}(\mathbf{M})\subseteq{\cal E}_{d}$
consists of all effects which figure in the measurement set $\mathbf{M}$.
In a given scenario, not every effect defined on the space ${\cal H}$
necessarily represents a measurement outcome, thus the set $\mathcal{E}(\mathbf{M})$
may be a proper subset of all effects on the space ${\cal H}$. For
example, in the von Neumann scenario the effect space ${\cal E}(\mathbf{PVM}_{d})$
consists solely of projection operators. The largest possible measurement
set on a Hilbert space with dimension $d$ is given by $\mathbf{POM}_{d}$,
with the only requirement on a measurement $\mathbb{M}=\left\llbracket e_{1},e_{2},\ldots,e_{n}\right\rrbracket \in\mathbf{POM}_{d}$
being that the effects $e_{j}$ satisfy Eq. (\ref{eq: completeness of effects}),
so that indeed $\mathcal{E}(\mathbf{POM}{}_{d})={\cal E}_{d}$. In
other words, every effect will figure in some measurement in a scenario
which considers all POMs to be realisable. In general, POMs may have
infinitely many outcomes. For our considerations, however, those with
only finitely many outcomes will be sufficient.

After these preliminaries, we are ready to describe the role of a
quantum state in a measurement scenario: it should map each effect
$e\in\mathcal{E}(\mathbf{M})$ in the corresponding effect space to
a probability in such a way that the probabilities of all the outcomes
in each measurement $\mathbb{M}\in\mathbf{M}$ sum to one. Such a
map is known as a frame function \cite{Caves2004}.
\begin{defn}
\label{frame function}Let $\mathcal{E}(\mathbf{M})$ be the effect
space associated with the measurement set $\mathbf{M}$. A \emph{frame
function} $f$ in this measurement scenario is a map $f:\mathcal{E}(\mathbf{M})\rightarrow\left[0,1\right]$
such that 
\begin{equation}
\sum_{e_{j}\in\mathbb{M}}f\left(e_{j}\right)=1,\label{eq:frame function def}
\end{equation}
for all measurements $\mathbb{M}$ in the set $\mathbf{M}$.
\end{defn}
We will say that the frame function $f$ \emph{respects the measurement
set }$\mathbf{M}$ if it consistently assigns probabilities to all
effects present in the measurement scenario defined by the set ${\cal \mathbf{M}}$.
Structurally, frame functions resemble probability measures which
quantify the size of disjoint subsets of a sample space, say, with
a relation similar to (\ref{eq:frame function def}) expressing normalization. 

As discussed by Caves et al. \cite{Caves2004}, this approach is intrinsically
non-contextual. When associating outcomes from distinct measurements
with the same mathematical object, we are prescribing that they must
occur with the same probability for a system in a given state, regardless
of context, i.e. which measurements are being performed (see also
\cite{Bell1966}).

\subsection{Gleason's theorem}

Gleason's theorem conveys a limitation of the form which frame functions
may take in a Hilbert space of dimension larger than two. Using the
concepts just introduced, the theorem can be expressed as follows. 
\begin{thm}[Gleason \cite{Gleason1957}]
\label{thm: Gleason }Any frame function $f$ respecting the measurement
set $\mathbf{M}=\mathbf{PVM}_{d}$, $d\geq3$, admits an expression
\begin{equation}
f\left(e\right)=\Tr\left(\rho e\right),\label{eq:gleason-type def}
\end{equation}
for some density operator $\rho$ on $\mathcal{H}$, and all effects
$e\in\mathcal{E}(\mathbf{PVM}_{d})$.
\end{thm}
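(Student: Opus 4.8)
The plan is to follow Gleason's original strategy, which is streamlined here by the fact that in the von Neumann scenario the effect space $\mathcal{E}(\mathbf{PVM}_d)$ consists only of projection operators. First I would reduce the problem to rank-one projections. Since every projection is a sum of mutually orthogonal rank-one projections, and the frame function is additive over any such orthogonal family---this is Eq.~\eqref{eq:frame function def} applied to a projective measurement refining a coarser one---the values of $f$ on all projections are determined by its values on the rank-one projections $P_\psi = \kb{\psi}{\psi}$. Writing $G(\psi)\equiv f(P_\psi)$ for unit vectors $\ket{\psi}$, the defining property of a frame function becomes the condition that $\sum_j G(e_j)=1$ holds for every orthonormal basis $\{\ket{e_j}\}$ of $\mathcal{H}$.

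The crux of the argument is the real three-dimensional case, in which $G$ is a nonnegative function on the unit sphere $S^2\subset\R^3$ whose values sum to $1$ over every orthonormal triple. I would first record that $G$ is bounded, and then establish the genuinely hard step: that any such frame function is \emph{continuous}. With continuity in hand, I would expand $G$ in spherical harmonics and argue that the summation condition over orthonormal triples, being covariant under rotations, annihilates every harmonic component except those of degree $\ell=0$ and $\ell=2$. This forces $G$ to be the restriction to $S^2$ of a quadratic form, i.e. $G(\psi)=\bra{\psi}S\ket{\psi}$ for a real symmetric operator $S$.

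With the real case settled, I would extend to arbitrary dimension $d\geq3$ and to the complex field by a patching argument. Any two unit vectors lie in a common real three-dimensional subspace of $\mathcal{H}$, so applying the $\R^3$ result on each such subspace produces local quadratic forms that glue consistently into a single operator $T$ on $\mathcal{H}$ with $f(P_\psi)=\bra{\psi}T\ket{\psi}$; in the complex setting the only additional care needed is to check that the off-diagonal phases combine so that $T$ is Hermitian. Extending by additivity then gives $f(e)=\Tr(Te)$ for all projections $e$, the normalisation $\sum_j f(P_j)=1$ over any basis forces $\Tr T=1$, and the nonnegativity of $f$ forces $T\geq\zero$. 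Hence $T=\rho$ is a density operator, as claimed.

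I expect the continuity lemma in the $\R^3$ case to be the main obstacle. Deducing from the bare orthonormal-triple constraint that a bounded frame function cannot oscillate---and is therefore smooth enough to be a quadratic form---is the delicate geometric heart of Gleason's proof; by comparison, the reduction to rank-one projections, the patching across subspaces, and the final identification of $\rho$ are comparatively routine. It is also worth noting that the restriction $d\geq3$ enters precisely through the availability of orthonormal triples spanning three-dimensional subspaces, which is exactly the structure that fails for a qubit and motivates the present paper.
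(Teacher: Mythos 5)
Your proposal is correct in structure but takes a genuinely different---and far heavier---route than the paper. The paper never re-proves Gleason's theorem: Theorem \ref{thm: Gleason } is presented as a reformulation of the original result of \cite{Gleason1957}, and the only thing the paper actually proves (in Appendix A) is the equivalence of the two formulations. Concretely, it shows that a frame function respecting $\mathbf{PVM}_{d}$ is additive over orthogonal projections, by comparing a measurement with its refinement (Eqs. (\ref{eq:additive frame function 1})--(\ref{eq:additive frame function 2})), so that $\mu\left(\mathcal{C}\right)=f\left(P_{\mathcal{C}}\right)$ defines a measure on closed subspaces satisfying Eq. (\ref{eq:measure}) with $\mu\left(\mathcal{H}\right)=1$; Gleason's original theorem then supplies the density operator, and conversely any such measure yields a frame function. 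Your opening paragraph is precisely this equivalence step (your ``refining measurement'' argument is the same additivity trick), but where the paper then stops and cites, you continue and sketch a proof of the cited result itself: the $S^{2}$ case via boundedness, the continuity lemma, and the spherical-harmonics reduction to degrees $\ell=0$ and $\ell=2$, followed by patching over (completely) real three-dimensional subspaces and a Hermiticity check in the complex case. That is indeed Gleason's own strategy, and you correctly locate the genuine difficulty in the continuity lemma. What your route buys is self-containedness; what it costs is that the two hard steps---continuity of nonnegative frame functions on $S^{2}$, and the consistent gluing of the local quadratic forms into a single Hermitian operator on complex $\mathcal{H}$---are named rather than proved, so as written your text is a roadmap to Gleason's proof rather than a complete argument. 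The paper's route buys brevity and full rigour by quarantining exactly those steps inside the citation, at the price of resting on an external theorem.
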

Originally, Gleason's theorem was stated in terms of measures $\mu$
acting on closed subspaces of the space $\mathcal{H}$. For a countable
collection of mutually orthogonal closed subspaces $\left\{ \mathcal{H}_{1},\mathcal{H}_{2},\ldots,\mathcal{H}_{N}\right\} $
which span the entire space, a measure must satisfy
\begin{equation}
\mu\left(\Span\left\{ \mathcal{H}_{1},\mathcal{H}_{2},\ldots,\mathcal{H}_{N}\right\} \right)=\sum_{j=1}^{N}\mu\left({\cal H}_{j}\right)\,.\label{eq:measure}
\end{equation}
If the dimension of the space $\mathcal{H}$ is at least three, then
any such measure $\mu$ with $\mu\left(\mathcal{H}\right)=1$, necessarily
derives from a density operator $\rho$ on $\mathcal{H}$, via $\mu\left({\cal H}_{j}\right)=\Tr\left(\rho P_{j}\right)$,
$j=1\ldots N$, where the operator $P_{j}$ is the projection onto
the subspace ${\cal H}_{j}$. As shown in Appendix A, Theorem \ref{thm: Gleason }
is equivalent to the original statement of Gleason's theorem.

If two measurements share an effect we will say---following Gleason---that
they \emph{intertwine}. In Hilbert spaces of dimension greater than
two the value of a frame function on any two projections is related
through measurements which intertwine. This relationship then paves
the way for Gleason's theorem. In contrast, projective measurements
on $\mathbb{C}^{2}$ do not intertwine which means that a frame function
may assign probabilities freely to any two non-orthogonal projections.
This freedom allows for frame functions that do not derive from the
trace rule, such as Eq. (\ref{eq:gleason 2 counterexample}) in Sec.
\ref{subsec: Minimal assumptions} below. If, however, one considers
POMs, measurements also intertwine in dimension two. The consequences
of this fact will be seen in the next section.

\subsection{Gleason-type theorems}

Let us now turn to \emph{Gleason-type theorems}, the main topic of
this paper. They are variants of Theorem \ref{thm: Gleason } based
on measurement sets $\mathbf{M}$ different from $\mathbf{PVM}_{d}$.
The resulting, larger effect spaces $\mathcal{E}(\mathbf{M})$ allow
one to extend Gleason's theorem to the case of a qubit and to derive
the result (\ref{eq:gleason-type def}) in a simpler way.

The first Gleason-type theorem was obtained by Busch \cite{Busch2003},
using the measurement set $\mathbf{M}=\mathbf{POM}_{d}$. This is
the most general measurement scenario containing all possible POMs,
and hence has the largest possible effect space, ${\cal E}(\mathbf{POM}_{d})$.
\begin{thm}[Busch \cite{Busch2003}]
 \label{thm: Busch}Any frame function f respecting the measurement
set $\mathbf{M}=\text{\ensuremath{\mathbf{POM}}}_{d}$, $d\geq2$,
admits an expression 
\begin{equation}
f\left(e\right)=\Tr\left(e\rho\right),\label{eq:busch density-2-2}
\end{equation}
for some density operator $\rho$ on $\mathcal{H}$, and all effects
$e\in\mathcal{E}(\mathbf{POM}_{d})\equiv\mathcal{E}_{d}$.
\end{thm}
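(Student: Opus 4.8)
The plan is to prove Busch's theorem (Theorem \ref{thm: Busch}) by exploiting the rich intertwining structure of $\mathbf{POM}_{d}$, which---unlike $\mathbf{PVM}_{d}$---is available already in dimension two. The key observation is that a frame function respecting all POMs must be additive on effects in the following strong sense: whenever $e_1, e_2, \ldots, e_n$ are effects with $\sum_j e_j = \I$, they form a legitimate measurement, so $\sum_j f(e_j) = 1$. From this I would first extract additivity for \emph{pairs}: given any effect $e$, the pair $\llbracket e, \I - e\rrbracket$ is a two-outcome measurement, so $f(e) + f(\I - e) = 1$. More usefully, if $e_1 + e_2 \leq \I$, then $e_1, e_2, \I - e_1 - e_2$ form a three-outcome measurement, and comparing with the two-outcome measurement $\llbracket e_1 + e_2, \I - e_1 - e_2\rrbracket$ yields the crucial \emph{additivity relation}
\begin{equation}
f(e_1 + e_2) = f(e_1) + f(e_2) \quad \text{whenever } e_1 + e_2 \leq \I. \label{eq:additivity}
\end{equation}

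The second step is to bootstrap \eqref{eq:additivity} into full $\mathbb{R}$-linearity on the real vector space of Hermitian operators. First I would extend $f$ from effects to all positive operators by scaling: for a positive operator $A$, choose $\lambda > 0$ small enough that $\lambda A \leq \I$ and define $\tilde f(A) = \lambda^{-1} f(\lambda A)$; relation \eqref{eq:additivity} guarantees this is well-defined and additive on positive operators, and additivity forces $\mathbb{Q}_{\geq 0}$-homogeneity. The passage from rational to real homogeneity is where continuity enters: since $f$ maps into $[0,1]$ and is additive, it is monotone (if $e_1 \leq e_2$ then $f(e_1) \leq f(e_2)$), and a monotone additive function on the positive operators is automatically $\mathbb{R}_{\geq 0}$-homogeneous and hence continuous. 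Finally I would extend $\tilde f$ to \emph{all} Hermitian operators by writing any Hermitian $H$ as a difference of positive operators, $H = H_+ - H_-$, and setting $\tilde f(H) = \tilde f(H_+) - \tilde f(H_-)$; additivity makes this well-defined and $\mathbb{R}$-linear.

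Once $\tilde f$ is a real-linear functional on the (real, finite-dimensional) space of Hermitian operators, the third step is immediate from the self-duality of this space under the Hilbert--Schmidt inner product $\langle A, B\rangle = \Tr(AB)$: there exists a unique Hermitian operator $\rho$ with $\tilde f(H) = \Tr(\rho H)$ for all Hermitian $H$, and in particular $f(e) = \Tr(\rho e)$ for every effect $e$. It then remains to verify that $\rho$ is a \emph{density operator}. Positivity follows by testing against rank-one projections: for any unit vector $\ket{\psi}$, the projection $\kb{\psi}{\psi}$ is an effect, so $\bra{\psi}\rho\ket{\psi} = \Tr(\rho\,\kb{\psi}{\psi}) = f(\kb{\psi}{\psi}) \geq 0$, giving $\rho \geq \zero$; and unit trace follows from $\Tr\rho = \tilde f(\I) = f(\I) = 1$, since $\llbracket \I \rrbracket$ is the trivial one-outcome measurement (equivalently, $f(e) + f(\I - e) = 1$ with $e = \zero$).

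I expect the main obstacle to lie in the second step---specifically in upgrading additivity from the rationals to the reals, i.e. establishing continuity or $\mathbb{R}$-homogeneity of $\tilde f$. Cauchy's functional equation admits pathological non-measurable solutions, so additivity alone does not give linearity; the argument must genuinely use the boundedness of $f$ (its range is $[0,1]$) to force monotonicity and thereby regularity. A secondary subtlety is handling effects that are not extremal and ensuring the extension from effects to general Hermitian operators is consistent across the different scalings; careful bookkeeping with \eqref{eq:additivity} should resolve this, but it is the step where one must be most cautious.
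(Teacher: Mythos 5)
Your proposal is correct and follows essentially the same route as the proof the paper attributes to Busch \cite{Busch2003} and sketches immediately after Theorem \ref{thm: Busch}: additivity of $f$ extracted from intertwined two- and three-outcome POMs, rational homogeneity, then real homogeneity via positivity/monotonicity (the step that defeats the Cauchy-equation pathologies you rightly flag), linear extension to Hermitian operators, and identification of $\rho$ through the Hilbert--Schmidt duality. The one minor ordering point is that well-definedness of your scaling extension $\tilde f$ for irrational ratios needs real homogeneity on effects to be established \emph{first} (from additivity plus monotonicity), rather than following from additivity alone, but this is exactly the bookkeeping you anticipate and is handled the same way in Busch's and Caves et al.'s arguments.
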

The assumptions of this theorem are indeed stronger than those of
Theorem \ref{thm: Gleason } because probabilities are assigned to
all \emph{effects,} not just collections of mutually orthogonal projections
in the space $\mathbb{C}^{d}$. Busch required that \emph{generalised
probability measures} $v:\mathcal{E}_{d}\rightarrow\left[0,1\right]$
would need to satisfy the constraints $v(e_{1}+e_{2}+\ldots)=v(e_{1})+v(e_{2})+\ldots$,
for any sequence of effects which may occur in a POM with any number
of outcomes, i.e. $e_{1}+e_{2}+\ldots\leq\I$. This condition is easily
shown to be equivalent to the assumptions in Theorem \ref{thm: Busch}.

The proof of the Theorem \ref{thm: Busch} differs conceptually from
the one given by Gleason. The additivity of frame functions with respect
to any two effects $e_{1}$ and $e_{2}$ occurring in a single measurement
$\mathbb{M}$,
\begin{equation}
f(e_{1}+e_{2})=f(e_{1})+f(e_{2})\,,\label{eq: additivity}
\end{equation}
forces the frame function to be\emph{ homogeneous} for rational numbers,
$f\left(qe\right)=qf\left(e\right)$, $q\in\mathbb{Q}$. Combining
additivity with positivity, $f(e)\geq0$, a frame function is, furthermore,
seen to be homogeneous for \emph{real} numbers, $f(\alpha e)=\alpha f(e)$,
$\alpha\in\mathbb{R}$, and hence is necessarily linear. Extending
this expression linearly from effects to arbitrary Hermitian operators
is consistent only with frame functions given by the trace expression
(\ref{eq:busch density-2-2}). The proof also works in separable Hilbert
spaces of infinite dimension.

An alternative proof of Busch's Gleason-type theorem was given by
Caves et al. \cite{Caves2004}. Instead of showing that frame functions
must be homogeneous, Caves et al. establish their \emph{continuity,}
first at the effect $\zero$, and then for all effects. This property
implies, of course, that frame functions must be linear functions
of effects. 

Revisiting Gleason's theorem, Granstr{\"o}m \cite{mastersthesis} proceeds
along the lines of Busch and Caves et al. when rephrasing the proof.
Interestingly, she only uses POMs with at most \emph{three} outcomes:
the measurement set is given by $\mathbf{M}=\mathbf{3POM}_{d}$ where
any 
\begin{equation}
\mathbb{M}=\left\llbracket e_{1},e_{2},e_{3}\right\rrbracket \in\mathbf{3POM}_{d}\label{eq: three-outcome}
\end{equation}
is a collection of at most three effects. Granstr{\"o}m's observation
is important since her derivation is based on a considerably smaller
measurement set than the one required for the earlier Gleason-type
theorems. 

The following section shows an even smaller measurement set is sufficient
to derive a Gleason-type theorem in dimensions $d\geq2$. The reduction
is not only quantitative but also represents a conceptual simplification
since only POMs arising from classical mixtures of projective measurements,
known as projective-simulable measurements, will be required. 

\section{Assigning probabilities to mixtures of projections\label{sec:main result}}

\subsection{Projective-simulable measurements\label{subsec:Projective-simulable-POVMs}}

\emph{Projective-simulable measurements} (PSMs) are specific POMs
which can be realized by performing projective measurements and combining
them with classical protocols \cite{Oszmaniec2017}. The relevant
classical procedures are given by probabilistically mixing projective
measurements and post-processing of measurement outcomes. Hence, the
experimental implementation of projective-simulable---or \emph{simulable},
for brevity---measurements is not more challenging than that of projective
measurements. In the following, we will suppress any post-processing
since it can always be eliminated by working with suitable mixtures
of measurements (see Lemma 1 in \cite{Oszmaniec2017}). It is important
to note that not all POMs are simulable \cite{Oszmaniec2017}; thus
they represent a proper, non-trivial subset of all POMs.

We will now introduce some two- and three-outcome measurements of
a qubit which are projective-simulable. These are the only measurements
necessary to derive the Gleason-type theorem of Sec. \ref{subsec: PSM Gleason-type-theorem}
when $d=2$. To begin, any (non-trivial) projective qubit measurement
with two outcomes takes the form $\mathbb{M}=\left\llbracket P_{+},P_{-}\right\rrbracket $,
with projections $P_{+}$ and $P_{-}\equiv\I-P_{+}$ on orthogonal
one-dimensional subspaces of the space $\mathbb{C}^{2}$. For example,
on a spin-$\frac{1}{2}$ particle the measurements implemented by
a Stern-Gerlach apparatus oriented along the $x$- or the $z$- axis
would be represented by 
\begin{alignat}{2}
\mathbb{M}_{x}=\frac{1}{2}\left\llbracket \I+\sigma_{x},\I-\sigma_{x}\right\rrbracket  & \quad\text{and}\quad & \mathbb{M}_{z}=\frac{1}{2}\left\llbracket \I+\sigma_{z},\I-\sigma_{z}\right\rrbracket \,,\label{eq:x and z POVMs}
\end{alignat}
respectively. Now imagine a device which performs $\mathbb{M}_{x}$
with probability $p\in[0,1]$ and $\mathbb{M}_{z}$ with probability
$\left(1-p\right)$. The statistics produced by this apparatus are,
in general, no longer described described by a PVM but by a POM, namely
by
\begin{equation}
\mathbb{M}_{xz}(p)=p\mathbb{M}_{x}+\left(1-p\right)\mathbb{M}_{z}\,.\label{eq:x and z mixture}
\end{equation}
Consequently, the POM

\begin{equation}
\mathbb{M}_{xz}(p)=\frac{1}{2}\left\llbracket \I+p\sigma_{x}+\left(1-p\right)\sigma_{z},\I-p\sigma_{x}-\left(1-p\right)\sigma_{z}\right\rrbracket \label{eq: explicit Mxz(p)}
\end{equation}
is projective-simulable since only a probabilistic mixture of projective
measurements is required to implement it. Mixing the simulable measurement
$\mathbb{M}_{xz}(p)$ with another projective or simulable measurement
would result in yet another simulable measurement. 

Clearly, this procedure can be lifted to a Hilbert space with dimension
$d$: mixing any pair of projective or simulable measurements $\mathbb{M}$
and $\mathbb{M}^{\prime}$ with the same number of outcomes, say,
produces another simulable measurement represented by $\mathbb{L}(p)=p\mathbb{M}+(1-p)\mathbb{M}^{\prime}$.
In low dimension such as $d=2$ and $d=3$, the set of $n$-outcome
POMs which can be reached in this way has been characterized in terms
of semi-definite programs \cite{Oszmaniec2017}. 

In our context, the following result for POMs with two outcomes will
be important. 
\begin{lem}
\label{Lemma: two outcome}Let $\mathcal{H}$ be a Hilbert space with
finite dimension $d$. For any effect $e\in\mathcal{E}_{d}$ the two-outcome
POM 
\begin{equation}
\mathbb{D}_{e}=\left\llbracket e,\I-e\right\rrbracket \label{eq: D_e}
\end{equation}
is projective-simulable, i.e. $\mathbb{D}_{e}\in{\cal E}(\mathbf{PSM}_{d})$.
\end{lem}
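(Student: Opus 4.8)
The plan is to reduce the claim to the statement that every effect lies in the convex hull of projection operators, and then to read off from such a convex decomposition of $e$ an explicit realisation of $\mathbb{D}_e$ as a mixture of two-outcome projective measurements. The first observation is that a projective two-outcome measurement $\llbracket Q,\I-Q\rrbracket$ is fixed by a single projection $Q$, and that for any weights $w_k\ge 0$ with $\sum_k w_k=1$ the combination $\sum_k w_k\llbracket Q_k,\I-Q_k\rrbracket$ has second effect $\sum_k w_k(\I-Q_k)=\I-\sum_k w_k Q_k$. Hence it suffices to find projections $Q_k$ and weights $w_k\ge 0$ summing to one with $e=\sum_k w_k Q_k$: given these, $\mathbb{D}_e=\sum_k w_k\llbracket Q_k,\I-Q_k\rrbracket$ is a convex combination of projective measurements and so is projective-simulable by definition.

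To construct such a decomposition I would invoke the spectral theorem. As $e$ is Hermitian with $\zero\le e\le\I$, its eigenvalues lie in $[0,1]$; listing the distinct ones as $0\le\mu_1<\cdots<\mu_r\le 1$ with orthogonal eigenprojections $Q_i$ satisfying $\sum_i Q_i=\I$, we have $e=\sum_{i=1}^r\mu_i Q_i$. The device I would use is the layer-cake identity
\[
e=\int_0^1 E_t\,dt,\qquad E_t:=\sum_{i:\,\mu_i\ge t}Q_i,
\]
in which each threshold operator $E_t$ is itself a projection and which holds because $\int_0^1\mathbf{1}[\mu_i\ge t]\,dt=\mu_i$. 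Since $t\mapsto E_t$ is a step function taking only finitely many values, the integral collapses to the finite sum
\[
e=\sum_{i=1}^r(\mu_i-\mu_{i-1})\,R_i+(1-\mu_r)\,\zero,\qquad R_i:=\sum_{k\ge i}Q_k,\quad \mu_0:=0,
\]
whose coefficients are nonnegative and telescope to $\mu_r+(1-\mu_r)=1$, and whose operators $R_i$ and $\zero$ are all projections. This is precisely the convex decomposition of $e$ into projections demanded in the first step.

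I expect no deep obstacle; the only points requiring care are bookkeeping. I would confirm that the degenerate measurement $\llbracket\zero,\I\rrbracket$ carrying the weight $1-\mu_r$ qualifies as a (trivial) element of $\mathbf{PVM}_d$, or else eliminate it when $\mu_r<1$ by observing that it contributes nothing to the first effect and renormalising over the remaining terms. I would likewise note that every measurement entering the mixture has the same number of outcomes---two---so that the convex combination is again a genuine two-outcome measurement, and that the argument nowhere uses $d=2$, so it establishes the lemma uniformly for all finite dimensions. Collecting these observations yields $\mathbb{D}_e\in\mathbf{PSM}_d$, whence $e\in\mathcal{E}(\mathbf{PSM}_d)$.
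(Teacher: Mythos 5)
Your proof is correct and follows essentially the same route as the paper: a spectral decomposition of $e$, telescoped into tail spectral projections weighted by eigenvalue increments, realises $\mathbb{D}_e$ as a convex mixture of two-outcome projective measurements including the trivial one $\left\llbracket \zero,\I\right\rrbracket$ with weight $1-\mu_r$ (the paper works with eigenvalues listed with multiplicity rather than distinct ones, and your layer-cake framing is just a repackaging of the same telescoping, so the difference is immaterial). One caveat: your fallback of discarding $\left\llbracket \zero,\I\right\rrbracket$ and renormalising when $\mu_r<1$ would not work, since the resulting mixture would simulate $\mathbb{D}_{e/\mu_r}$ rather than $\mathbb{D}_e$ --- but this aside is unnecessary, because $\left\llbracket \zero,\I\right\rrbracket$ is a legitimate element of $\mathbf{PVM}_d$, exactly as in the paper's own proof, which likewise includes it as $\mathbb{P}_0$ with weight $p_0=1-\lambda_d$.
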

\begin{proof}
Let $e\in\mathcal{E}_{d}$ be an effect with eigenvalues $\lambda_{j}\in\left[0,1\right]$,
$j=1\ldots d$, labeled in ascending order, i.e. $\lambda_{j}\leq\lambda_{j+1}$.
Being a Hermitian operator, the spectral theorem implies that the
effect $e$ can be written as a linear combination
\begin{equation}
e=\sum_{j=1}^{d}\lambda_{j}P_{j}\,,\label{eq:spectral theorem}
\end{equation}
where $P_{j}\in\mathcal{E}_{d}$ are rank-1 projections onto mutually
orthogonal subspaces of $\mathcal{H}$. Defining the projectors $Q_{k}=\sum_{j=k}^{d}P_{j}$
and letting $p_{k}=\left(\lambda_{k}-\lambda_{k-1}\right)\geq0$ for
$k=1\ldots d$, where $\lambda_{0}\equiv0$, we may rewrite Eq. (\ref{eq:spectral theorem})
as
\begin{equation}
e=\sum_{k=1}^{d}\left(\lambda_{k}-\lambda_{k-1}\right)Q_{k}=\sum_{k=1}^{d}p_{k}Q_{k}\,.\label{eq: Q-projector decomposition}
\end{equation}
This expression for the effect $e$ can be found in \cite{Caves2004}. 

Next, consider the $(d+1)$ projective measurements $\mathbb{P}_{j}=\left\llbracket Q_{j},\I-Q_{j}\right\rrbracket $,
$j=0\ldots d$, where $Q_{0}=\zero$. The choice $p_{0}=\left(1-\lambda_{d}\right)$
ensures that the $(d+1)$ non-negative numbers $p_{j}$ correspond
to probabilities, and satisfy $\sum_{j=0}^{d}p_{j}=1.$ A mixture
of measurements, in which $\mathbb{P}_{j}$ is performed with probability
$p_{j}$, then simulates the desired POM in (\ref{eq: D_e}) since
we have 
\begin{equation}
\sum_{j=0}^{d}p_{j}\mathbb{P}_{j}=\left\llbracket \sum_{j=0}^{d}p_{j}Q_{j},\sum_{j=0}^{d}p_{j}\left(\I-Q_{j}\right)\right\rrbracket =\left\llbracket \sum_{j=1}^{d}p_{j}Q_{j},\I-\sum_{j=1}^{d}p_{j}Q_{j}\right\rrbracket =\left\llbracket e,\I-e\right\rrbracket \,,\label{eq: general 2-outcome}
\end{equation}
which completes the proof.
\end{proof}
Any simulable two-outcome measurement such as $\mathbb{D}_{e}$ can
be used to define simulable three-outcome measurements via $\left\llbracket \zero,e,\I-e\right\rrbracket $
or $\left\llbracket e,\zero,\I-e\right\rrbracket $, for example,
simply by including the effect $\zero$ associated with an outcome
which will never occur. This observation allows us to easily introduce
further simulable three-outcome measurements as probabilistic mixtures. 
\begin{lem}
\label{Lemma three-outcome} Let $\mathcal{H}$ be a Hilbert space
with finite dimension $d$. For any effects $e$ and $e^{\prime}$
with $e+e^{\prime}\in\mathcal{E}_{d}$, the three-outcome POMs 
\begin{equation}
\mathbb{T}_{e}=\left\llbracket \frac{e}{2},\frac{e}{2},\I-e\right\rrbracket \quad\mbox{and}\quad\mathbb{T}_{e,e^{\prime}}=\left\llbracket \frac{e}{2},\frac{e^{\prime}}{2},\I-\frac{\left(e+e^{\prime}\right)}{2}\right\rrbracket \label{eq: three outcomes}
\end{equation}
 are projective-simulable, i.e. $\mathbb{T}_{e}\text{ and }\mathbb{T}_{e,e^{\prime}}\in{\cal E}(\mathbf{PSM}_{d})$. 
\end{lem}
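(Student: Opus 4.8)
The plan is to express each of the two three-outcome POMs as an explicit convex combination of measurements already known to be projective-simulable, and then invoke the fact---stated in the paragraph following Eq.~(\ref{eq:x and z mixture})---that a probabilistic mixture of simulable measurements is again simulable. The raw material for this is supplied by Lemma~\ref{Lemma: two outcome}, which guarantees that for \emph{any} effect $g\in\mathcal{E}_{d}$ the two-outcome POM $\mathbb{D}_{g}=\left\llbracket g,\I-g\right\rrbracket$ is simulable, together with the observation in the sentence preceding this lemma that padding such a two-outcome POM with a $\zero$-effect outcome yields a simulable three-outcome POM (the never-occurring outcome contributes nothing).

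For the first POM $\mathbb{T}_{e}$, I would note that since $e\in\mathcal{E}_{d}$ the two-outcome POM $\mathbb{D}_{e}=\left\llbracket e,\I-e\right\rrbracket$ is simulable by Lemma~\ref{Lemma: two outcome}. Padding it in the two available ways gives the simulable three-outcome POMs $\left\llbracket e,\zero,\I-e\right\rrbracket$ and $\left\llbracket \zero,e,\I-e\right\rrbracket$. Their equal mixture, taken componentwise in the vector space of three-outcome measurements described in Sec.~\ref{subsec:Preliminaries}, is
\begin{equation}
\tfrac{1}{2}\left\llbracket e,\zero,\I-e\right\rrbracket+\tfrac{1}{2}\left\llbracket \zero,e,\I-e\right\rrbracket=\left\llbracket \tfrac{e}{2},\tfrac{e}{2},\I-e\right\rrbracket=\mathbb{T}_{e}\,,\label{eq: Te-mixture}
\end{equation}
so $\mathbb{T}_{e}$ is a convex combination of simulable measurements and hence simulable.

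For the second POM $\mathbb{T}_{e,e'}$ the same idea applies, but the bookkeeping is slightly more delicate because it involves two distinct effects $e$ and $e'$. The key structural observation is that the hypothesis $e+e'\in\mathcal{E}_{d}$ is exactly what is needed for $\mathbb{D}_{e}$, $\mathbb{D}_{e'}$, and $\mathbb{D}_{e+e'}$ all to be legitimate (and hence, by Lemma~\ref{Lemma: two outcome}, simulable) two-outcome POMs. I would pad $\mathbb{D}_{e}$ and $\mathbb{D}_{e'}$ with a zero outcome in complementary slots, for instance forming the simulable POMs $\left\llbracket e,\zero,\I-e\right\rrbracket$ and $\left\llbracket \zero,e',\I-e'\right\rrbracket$, and then seek coefficients making their mixture equal to $\mathbb{T}_{e,e'}$. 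The half-weights in $\mathbb{T}_{e,e'}$ suggest mixing with weight $\tfrac12$ each, which reproduces the first two slots correctly as $\tfrac{e}{2}$ and $\tfrac{e'}{2}$; one then checks the third slot equals $\I-\tfrac{(e+e')}{2}$ automatically by completeness, since each padded POM is complete and a convex combination of complete effect-sequences is complete.

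The main obstacle I anticipate is not conceptual but arithmetic: ensuring that the chosen convex weights simultaneously produce the correct first and second effects while respecting completeness, so that no separate verification of the third slot is required beyond invoking Eq.~(\ref{eq: completeness of effects}). A secondary point worth stating explicitly is that convex combinations of simulable measurements are simulable---this is the claim made just after Eq.~(\ref{eq:x and z mixture})---so that once each POM is written as such a combination the conclusion $\mathbb{T}_{e},\mathbb{T}_{e,e'}\in\mathcal{E}(\mathbf{PSM}_{d})$ follows immediately. I expect the entire argument to be short, with the only genuine input being Lemma~\ref{Lemma: two outcome} and the padding trick; the condition $e+e'\in\mathcal{E}_{d}$ is used precisely to keep every padded two-outcome building block a valid POM.
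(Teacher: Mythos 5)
Your proposal is correct and coincides with the paper's own proof: both $\mathbb{T}_{e}$ and $\mathbb{T}_{e,e'}$ are written as equal mixtures of the zero-padded simulable two-outcome POMs $\left\llbracket e,\zero,\I-e\right\rrbracket$ and $\left\llbracket \zero,e,\I-e\right\rrbracket$ (respectively $\left\llbracket \zero,e',\I-e'\right\rrbracket$), exactly as in Eqs.~(\ref{eq: Te as mixture}) and (\ref{eq: T' as mixture-1}). One minor quibble: the hypothesis $e+e'\in\mathcal{E}_{d}$ is not actually what keeps the padded building blocks valid (only $e,e'\in\mathcal{E}_{d}$ individually is needed for that); it is simply part of the lemma's statement because that is the case invoked later in the proof of Theorem~\ref{thm: PSM Gleason-type }.
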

\begin{proof}
The measurement $\mathbb{T}_{e}$ can be obtained from an equal mixture
of two three-outcome measurements, 
\begin{equation}
\mathbb{T}_{e}=\frac{1}{2}\left\llbracket e,\zero,\I-e\right\rrbracket +\frac{1}{2}\left\llbracket \zero,e,\I-e\right\rrbracket \,,\label{eq: Te as mixture}
\end{equation}
each of which is a padded copy of the simulable two-outcome measurement
$\mathbb{P}_{e}$. A slight modification of this argument shows that
the measurement $\mathbb{T}_{e,e^{\prime}}$ corresponds to an equal
probabilistic mixture of two simple simulable three-outcome measurements,
viz.,
\begin{equation}
\mathbb{T}_{e,e^{\prime}}=\frac{1}{2}\left\llbracket e,\zero,\I-e\right\rrbracket +\frac{1}{2}\left\llbracket \zero,e^{\prime},\I-e^{\prime}\right\rrbracket \,.\label{eq: T' as mixture-1}
\end{equation}
\end{proof}
Finally, we would like to point out that in dimension $d=2$, the
measurement set $\mathbf{3PSM}_{2}$, which consists of all three-outcome
simulable POMs, is an eight-parameter family strictly smaller than
$\mathbf{3POM}_{2}$, the set of all all POMs with three outcomes.
For example, the three-outcome POM
\begin{equation}
\mathbb{E}=\frac{1}{3}\left\llbracket \I+\sigma_{x},\I-\frac{1}{2}\sigma_{x}+\frac{\sqrt{3}}{2}\sigma_{z},\I-\frac{1}{2}\sigma_{x}-\frac{\sqrt{3}}{2}\sigma_{z}\right\rrbracket \label{eq:SDP POVM}
\end{equation}
is \emph{not} projective-simulable, which can be verified via the
semi-definite program provided in \cite{Oszmaniec2017}.

\subsection{A Gleason-type theorem based on PSMs\label{subsec: PSM Gleason-type-theorem}}

We will now state and prove our main result, a Gleason-type theorem
derived from projective-simulable measurements.
\begin{thm}[Projective-simulable measurements]
 \label{thm: PSM Gleason-type }Any frame function $f$ respecting
the measurement set $\mathbf{M}=\text{\ensuremath{\mathbf{PSM}}}_{d}$,
$d\geq2$, admits an expression 
\begin{equation}
f\left(e\right)=\Tr\left(e\rho\right),\label{eq:busch density-1-1}
\end{equation}
for some density operator $\rho$ on $\mathcal{H}$, and all effects
$e\in\mathcal{E}\left(\mathbf{PSM}_{d}\right)\equiv\mathcal{E}_{d}$
.
\end{thm}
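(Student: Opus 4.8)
The plan is to reduce the theorem to the additivity property (\ref{eq: additivity}) and then invoke verbatim the linearity machinery already reviewed after Theorem \ref{thm: Busch}. The three measurement families furnished by Lemmas \ref{Lemma: two outcome} and \ref{Lemma three-outcome} have been engineered precisely so that a frame function respecting $\mathbf{PSM}_{d}$ must satisfy exactly the relations among its values that force additivity, even though the PSMs form a strictly smaller set than $\mathbf{POM}_{d}$.

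First I would extract the elementary relations. Evaluating $f$ on the two-outcome simulable measurement $\mathbb{D}_{e}$ of Lemma \ref{Lemma: two outcome} gives $f(e)+f(\I-e)=1$ for every effect $e$; in particular $f(\zero)=0$ and $f(\I)=1$ once additivity on $\zero$ is in hand. Next, evaluating $f$ on $\mathbb{T}_{e}=\left\llbracket e/2,e/2,\I-e\right\rrbracket$ from Lemma \ref{Lemma three-outcome} yields $2f(e/2)+f(\I-e)=1$, and subtracting the relation coming from $\mathbb{D}_{e}$ produces the halving identity $f(e/2)=\tfrac{1}{2}f(e)$, valid for all effects $e$.

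The key step is additivity. For any effects $e,e^{\prime}$ with $e+e^{\prime}\in\mathcal{E}_{d}$, evaluating $f$ on $\mathbb{T}_{e,e^{\prime}}=\left\llbracket e/2,e^{\prime}/2,\I-(e+e^{\prime})/2\right\rrbracket$ gives $f(e/2)+f(e^{\prime}/2)+f\bigl(\I-(e+e^{\prime})/2\bigr)=1$. Rewriting the last term as $1-f\bigl((e+e^{\prime})/2\bigr)$ via the measurement $\mathbb{D}_{(e+e^{\prime})/2}$ (legitimate since $(e+e^{\prime})/2$ is an effect whenever $e+e^{\prime}$ is), and then applying the halving identity to all three arguments, collapses the equation to $f(e)+f(e^{\prime})=f(e+e^{\prime})$. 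This is exactly (\ref{eq: additivity}), now secured for every pair of effects whose sum is again an effect, which is all that the subsequent argument requires.

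From additivity onward the reasoning is the established Busch--Caves one. Additivity forces rational homogeneity $f(qe)=qf(e)$; combined with positivity $f(e)\geq0$ it gives monotonicity, since $f(e^{\prime})-f(e)=f(e^{\prime}-e)\geq0$ whenever $e\leq e^{\prime}$, and a squeeze between rational multiples upgrades this to real homogeneity $f(\alpha e)=\alpha f(e)$. An additive, real-homogeneous map on effects extends uniquely to a real-linear functional on the Hermitian operators, hence by the Riesz representation theorem takes the form $A\mapsto\Tr(A\rho)$ for some Hermitian $\rho$; the normalization $f(\I)=1$ forces $\Tr\rho=1$ and positivity on rank-one projections forces $\rho\geq\zero$, so $\rho$ is a density operator and (\ref{eq:busch density-1-1}) follows. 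The only point demanding care is bookkeeping the domains: the halving and additivity identities must first be established on the full ranges of effects guaranteed by Lemmas \ref{Lemma: two outcome} and \ref{Lemma three-outcome} before the homogeneity extension is applied, but no genuine new obstacle arises there, since all the work specific to the smaller measurement set has already been absorbed into the simulability of $\mathbb{T}_{e}$ and $\mathbb{T}_{e,e^{\prime}}$.
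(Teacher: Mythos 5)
Your proposal is correct and takes essentially the same route as the paper: the halving identity extracted from $\mathbb{D}_{e}$ and $\mathbb{T}_{e}$, then additivity on pairs $e,e^{\prime}$ with $e+e^{\prime}\in\mathcal{E}_{d}$ extracted from $\mathbb{D}_{\frac{1}{2}(e+e^{\prime})}$ and $\mathbb{T}_{e,e^{\prime}}$, exactly as in the paper's proof. The only divergence is the endgame, and it is presentational rather than substantive: where you re-run the Busch--Caves machinery in full (rational homogeneity, monotonicity, real homogeneity, linear extension, trace form), the paper instead uses additivity inductively to conclude that $f$ respects the entire measurement set $\mathbf{POM}_{d}$ and then invokes Theorem \ref{thm: Busch} as a black box.
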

To prove this theorem, we will show that consistently assigning probabilities
to projective-simulable measurements entails a probability assignment
consistent with all POMs. In other words, a frame function $f$ respecting
the measurement set $\mathbf{PSM}_{d}$ necessarily respects the measurement
set $\mathbf{POM}_{d}$, at which point we can invoke Theorem \ref{thm: Busch}.
\begin{proof}
In a first step, we show that the probability assignments to the effects
$e$ and $e/2$, for any $e\in{\cal E}_{d}$, are not independent.
According to Lemmas \ref{Lemma: two outcome} and \ref{Lemma three-outcome},
the measurements $\mathbb{D}_{e}=\left\llbracket e,\I-e\right\rrbracket $
and $\mathbb{T}_{e}=\left\llbracket e/2,e/2,\I-e\right\rrbracket $
are projective-simulable. By the definition of a frame function $f$
given in (\ref{eq:frame function def}), the probabilities assigned
to the outcomes of these two measurements sum to one, 
\begin{equation}
f\left(e\right)+f\left(\I-e\right)=1=f\left(\frac{e}{2}\right)+f\left(\frac{e}{2}\right)+f\left(\I-e\right).\label{eq:main 2}
\end{equation}
Hence, for any effect $e\in\mathcal{E}_{d}$, we must have 
\begin{equation}
f\left(\frac{e}{2}\right)=\frac{1}{2}f\left(e\right)\,.\label{eq:main half}
\end{equation}

Next, we show that the frame function must be additive for any two
effects $e,e^{\prime}\in\mathcal{E}_{d}$ such that $e+e^{\prime}\in\mathcal{E}_{d}$.
Using Lemma \ref{Lemma: two outcome} again, with $e=(e+e^{\prime})/2$,
we find that the two-outcome measurement 
\begin{equation}
\mathbb{D}_{\frac{1}{2}(e+e^{\prime})}=\left\llbracket \frac{1}{2}\left(e+e^{\prime}\right),\I-\frac{1}{2}\left(e+e^{\prime}\right)\right\rrbracket \label{eq: M(e1+e2)}
\end{equation}
is simulable with projective measurements. Assigning probabilities
to the outcomes of the measurements $\mathbb{D}_{\frac{1}{2}(e+e^{\prime})}$
and $\mathbb{T}_{e,e^{\prime}}$ defined in Eq. (\ref{eq: three outcomes})
is only consistent if the constraint
\begin{equation}
f\left(\frac{1}{2}\left(e+e^{\prime}\right)\right)=f\left(\frac{1}{2}e\right)+f\left(\frac{1}{2}e^{\prime}\right)\,,\label{eq:additive 1}
\end{equation}
is satisfied. Due to the (limited) homogeneity of the frame function
stated in Eq. (\ref{eq:main half}), we conclude that it must be additive,
\begin{equation}
f\left(e+e^{\prime}\right)=f\left(e\right)+f\left(e^{\prime}\right)\,,\label{eq:additive 2}
\end{equation}
on all effects $e$ and $e^{\prime}$ such that $e+e^{\prime}\in\mathcal{E}_{d}$.

Now consider any $n$-outcome measurement $\mathbb{M}=\left\llbracket e_{1},e_{2},\ldots,e_{n}\right\rrbracket $
on $\mathbb{C}^{d}$, for $n\in\mathbb{N}$. Using (\ref{eq:additive 2})
repeatedly and recalling the normalization (\ref{eq: completeness of effects})
of effects, we find by induction that 
\begin{equation}
\begin{aligned}\sum_{j=1}^{n}f\left(e_{j}\right) & =f(e_{1}+e_{2})+\sum_{j=3}^{n}f\left(e_{j}\right)=\ldots=f\left(\sum_{j=1}^{n}e_{j}\right)=f(\I)=1\,.\end{aligned}
\label{eq:busch frame function}
\end{equation}
Hence, any frame function $f$ respecting $\mathbf{PSM}_{d}$ is seen
to respect the measurement set $\mathbf{POM}_{d}$, consisting of
all POMs. Therefore, by Theorem \ref{thm: Busch}, the frame function
must take the form $f\left(e\right)=\Tr\left(e\rho\right)$, for some
density operator $\rho$ on $\mathcal{H}$, and all effects $e\in\mathcal{E}_{d}$,
which is the content of Theorem \ref{thm: PSM Gleason-type }.
\end{proof}
The theorem just proved provides a weakening of the assumptions made
by Busch and Caves et al. in Theorem \ref{thm: Busch}: since the
set of measurements considered is smaller, fewer restrictions are
put on potential frame functions---but exactly the same functions
are recovered. 

\subsection{Minimal assumptions for a Gleason-type theorem\label{subsec: Minimal assumptions}}

We now address the problem of identifying the smallest measurement
set in dimension two from which a Gleason-type theorem maybe be derived.
Recall that Gleason's theorem does not hold in dimension two; frame
functions which respect $\mathbf{PVM}_{2}$ but do not stem from a
density operator (cf. Eq. (\ref{eq:gleason-type def})) are easy to
construct. For instance, assign probabilities to all rank-1 projectors---corresponding
to the points of the Bloch sphere---according to the rule
\begin{equation}
g\left(P\right)=\begin{cases}
0 & \text{ if }P=\ket 0\bra 0\,,\\
1 & \text{ if }P=\ket 1\bra 1\,,\qquad P\in{\cal E}(\mathbf{PVM}_{2})\\
\frac{1}{2} & \text{ otherwise}\,,
\end{cases}\label{eq:gleason 2 counterexample}
\end{equation}
in addition to $g(\zero)=0$ and $g(\I)=1.$ Then, for each projective
measurement $\mathbb{P}=\left\llbracket P,\I-P\right\rrbracket $
we find that the constraint (\ref{eq:frame function def}) on frame
functions is satisfied,
\begin{equation}
g(P)+g(\I-P)=1\,.\label{eq: d=00003D2 constraint on ff}
\end{equation}
Other probability assignments not admitting the desired trace form
can be found in \cite{Hall2016}, for example. These constructions
succeed since, for $d=2$, each projector $P$ occurs \emph{only in
one condition }of the\emph{ }form (\ref{eq: d=00003D2 constraint on ff})
i.e. there are no intertwined measurements.

Similarly frame functions defined on the measurements set $\mathbf{2POM}_{2}$,
the four-parameter family (\ref{eq: effects in C^2}) of POMs for
$\mathbb{C}^{2}$ with at most two outcomes, do not yield a Gleason-type
theorem. Extending the domain of the function $g$ in (\ref{eq:gleason 2 counterexample})
to all effects in $\mathcal{E}_{2}$ results in a frame function that
respects $\mathbf{2POM}_{2}$ but is not of the desired form. Thus,
measurements with three or more outcomes are a necessity in a set
from which a Gleason-type theorem may be proved. Theorem \ref{thm: PSM Gleason-type }
considers one such case, namely the set of projective-simulable measurements
$\mathbf{PSM}_{2}$ having $\mathbf{2POM}_{2}$ as a proper subset.

Could the measurement set $\mathbf{PSM}_{2}$ be the smallest sufficient
set? Looking back at the proof of Theorem \ref{thm: PSM Gleason-type }
given in the previous subsection, it becomes clear that only elements
of $\mathbf{PSM}_{d}$ with at most three outcomes, or those contained
in the set $\mathbf{3PSM}_{d}$, are necessary for the result to hold.
Furthermore, not all elements of the measurement set $\mathbf{3PSM}_{2}$
have been used. While \emph{all} two-outcome POMs $\mathbb{P}_{e}\in\mathbf{2POM}_{2}$
feature, the only simulable three-outcome POMs required are of the
form $\mathbb{T}_{e}$ or $\mathbb{T}_{e,e^{\prime}}$, defined in
(\ref{eq: three outcomes}). However, not all three-outcome simulable
POMs fall into one of these categories. For example, the three-outcome
measurement 
\begin{equation}
\mathbb{T}^{\prime}=\frac{1}{4}\left\llbracket \I+\sigma_{z},\I+\sigma_{x},2\I-\left(\sigma_{z}+\sigma_{x}\right)\right\rrbracket ,\label{eq:three outcome counter example}
\end{equation}
is simulable but does \emph{not} have the form of either $\mathbb{T}_{e}$
or $\mathbb{T}_{e,e^{\prime}}$. Thus, we have actually shown a result
slightly stronger than Theorem \ref{thm: PSM Gleason-type } since,
for $d=2$, we can replace the measurement set $\mathbf{M}$ on which
frame functions need to be defined by 
\begin{equation}
\mathbf{3PSM}_{2}^{\prime}=\mathbf{2POM}_{2}\cup\left\{ \mathbb{T}_{e},\mathbb{T}_{e,e^{\prime}}|e,e^{\prime}\in\mathcal{E}_{d}\text{ such that }e+e^{\prime}\in\mathcal{E}_{d}\right\} \,,\label{eq: def of Min}
\end{equation}
which is a proper subset of the measurement set $\mathbf{3PSM}_{2}\equiv\mathbf{3POM}_{2}\cap\mathbf{PSM}_{2}$,
i.e. all simulable measurements with three outcomes.

We conclude the discussion of ``minimal'' measurement sets by summarizing
the relationship between the sets sufficient to derive a Gleason-type
theorem for a qubit,
\begin{equation}
\mathbf{3PSM}_{2}^{\prime}\subset\left(\mathbf{3POM}_{2}\cap\mathbf{PSM}_{2}\right)\subset\mathbf{3POM}_{2}\subset\mathbf{POM}_{2}.\label{eq:inclusion}
\end{equation}
Fig. \ref{fig:mmt hierarchy} also depicts the insufficient subsets
of two-outcome projections $\mathbf{2PVM}_{2}$ and two-outcome POMs
denoted by $\mathbf{2POM}_{2}$. 

\begin{figure}
\centering{}\includegraphics[scale=0.18]{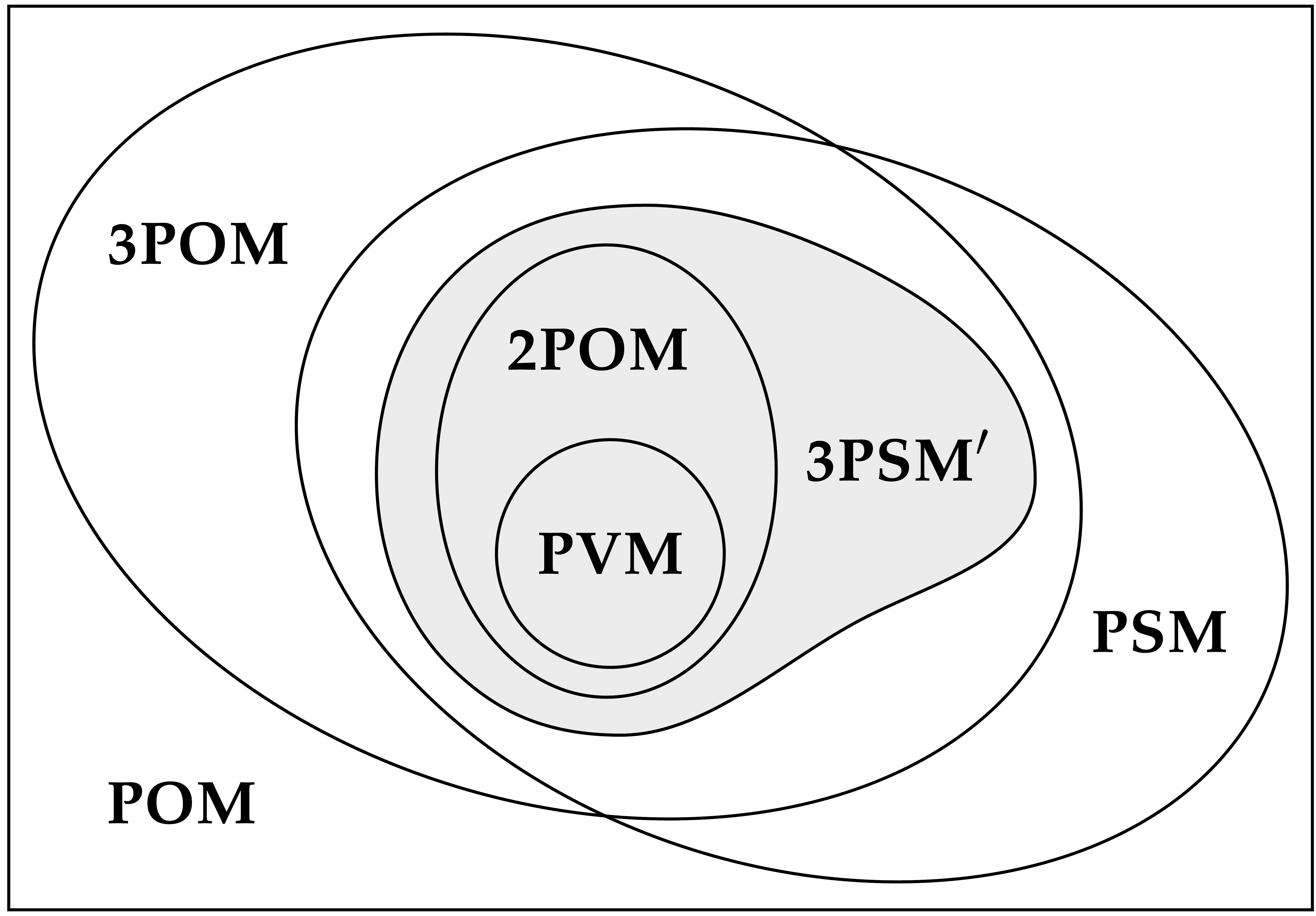}\caption{\label{fig:mmt hierarchy}Supersets and subsets of the set $\mathbf{3PSM}_{2}^{\prime}$
(grey) given in (\ref{eq: def of Min}), the smallest measurement
set known to entail a Gleason-type theorem for a qubit: it strictly
contains the set $\mathbf{2POM}_{2}$ of all two-outcome POMs (cf.
Eq. (\ref{eq:inclusion})) and is strictly contained by the set $\mathbf{3PSM}_{2}$
of all simulable three-outcome POMs; for clarity, the index $2$ has
been dropped from all measurements sets. }
\end{figure}

It is not excluded that measurement sets contained within (or partly
overlapping with) $\mathbf{3PSM}_{2}^{\prime}$ exist which would
still entail a Gleason-type theorem for qubits. In \cite{Caves2004}
frame functions respecting the single measurement (\ref{eq:SDP POVM})
have been shown to admit an expression as in Eq. (\ref{eq:gleason-type def}),
but the result depends the assumption that the frame functions be
continuous on the set of all effects in $\mathcal{E}_{2}$. Hence
this result does not constitute a Gleason-type theorem under our specification.

\subsection{Mixtures and Boolean lattices \label{subsec:Conceptual}}

We now consider how Theorem \ref{thm: PSM Gleason-type } can be interpreted
in view of Hall's discussion \cite{Hall2016} of Busch's Gleason-type
theorem, i.e. Theorem \ref{thm: Busch}. Hall reviews the reasons
which led Gleason (following the work of von Neumann and Birkhoff
\cite{vNBirkhofflogic1936} and Mackey \cite{mackey1957}) to consider
frame functions that respect the measurement set $\mathbf{PVM}_{d}$
consisting of \emph{projective} measurements. Namely, a collection
of mutually orthogonal projections forms a Boolean lattice, thus making
these projections natural candidates to represent disjoint outcomes
of an experiment. General collections of effects which sum to the
identity, on the other hand, do not have this property (see \cite{lahti1995partial},
for example); therefore, a similar justification for considering the
measurement set $\mathbf{POM}_{d}$ cannot be given.

This reasoning also applies to the setting of Theorem \ref{thm: PSM Gleason-type }
since the measurement set $\mathbf{PSM}_{d}$ (or the subset $\mathbf{3PSM_{2}^{\prime}}$)
contains operators other than projections. Nevertheless, the fact
that these measurement sets are made from simulable measurements lends
some support to motivating the additivity of frame functions. 

Gleason's original argument does not work for a qubit because the
constraints (\ref{eq:frame function def}) on frame functions which
result from the measurement set $\mathbf{PVM}_{2}$, are too weak.
If one wishes to derive Born's rule in the space $\mathbb{C}^{2}$,
it is necessary to consider measurement sets larger than $\mathbf{PVM}_{2}$,
thereby invalidating the link between measurements and Boolean lattices.
A particularly simple modification of the measurement set consists
of including convex combinations of the original projective measurements
in $\mathbf{PVM}_{2}$. If one interprets these convex combinations
as classical mixtures of projective measurements then one does not
make statements about other genuinely quantum mechanical measurements
which would lie beyond those of $\mathbf{PVM}_{2}$. 

Let us now make explicit all assumptions which are needed so that
our main result, Theorem \ref{thm: PSM Gleason-type }, may be used
to recover the standard description of states and outcome probabilities
of quantum theory. Importantly, similar---if not stronger---assumptions
must be made in order to achieve the same goal using the Gleason-type
theorems by Busch and Caves at al. 

The first assumption is that there exist projective measurements,
i.e. measurements whose outcomes may be represented by mutually orthogonal
projections on a Hilbert space. Secondly, we assume that it is possible
to perform classical mixtures of measurements, that is to say, given
a pair of measurements $\mathbb{M}$ and $\mathbb{M}^{\prime}$ then
there exists a procedure in which $\mathbb{M}$ is performed with
probability $p$ and $\mathbb{M}^{\prime}$ with probability $\left(1-p\right)$
for any $p\in\left[0,1\right]$. These assumptions alone are \emph{not}
sufficient to restrict states to being represented by density operators. 

To uncover the additional assumption which is needed to implement
our Gleason-type theorem let us consider the procedure just described
in the case of a qubit. For example, we may consider an equal mixture
$\mathbb{M}_{xz}(1/2)$ of the measurements $\mathbb{M}_{x}=\left\llbracket x_{+},x_{-}\right\rrbracket $
and $\mathbb{M}_{z}=\left\llbracket z_{+},z_{-}\right\rrbracket $
from Eq. (\ref{eq:x and z POVMs}) and a mixture $\mathbb{M}_{rs}(p_{+})$
of 
\begin{equation}
\begin{aligned}\mathbb{M}_{r}= & \left\llbracket r_{+},r_{-}\right\rrbracket \,\qquad r_{\pm}=\frac{1}{2}\left(\I\pm\frac{1}{2}\left(\sigma_{x}+\sqrt{3}\sigma_{z}\right)\right)\,,\\
\mathbb{M}_{s}= & \left\llbracket s_{+},s_{-}\right\rrbracket \,\qquad s_{\pm}=\frac{1}{2}\left(\I\pm\frac{1}{2}\left(\sigma_{x}-\sqrt{3}\sigma_{z}\right)\right)\,,
\end{aligned}
\label{eq:alternative POVMs-1-1}
\end{equation}
with probabilities $p_{\pm}=\left(1\pm1/\sqrt{3}\right)/2$, respectively.
Now let us work out the probabilities of the first outcomes of the
measurements $\mathbb{M}_{xz}(1/2)$ and $\mathbb{M}_{rs}(p_{+})$
resulting from the probability assignments given in Eq.\emph{ }(\ref{eq:gleason 2 counterexample}).
\begin{table}
\centering{}%
\begin{tabular}{l|c|c}
 & Probability of outcome 1 & Probability of outcome 2\tabularnewline
\hline 
$\mathbb{M}_{x}$, $\mathbb{M}_{r}$, $\mathbb{M}_{s}$ & $1/2$ & $1/2$\tabularnewline
\hline 
$\mathbb{M}_{z}$ & $0$ & $1$\tabularnewline
\hline 
\end{tabular}\caption{\label{tab:mmt probs}The probabilities of the outcomes of measurements
$\mathbb{M}_{x}$ and $\mathbb{M}_{z}$ in Eq. (\ref{eq:x and z POVMs})
as well as $\mathbb{M}_{r}$ and $\mathbb{M}_{s}$ in Eq. (\ref{eq:alternative POVMs-1-1})
arising from the probability assignment in Eq. (\ref{eq:gleason 2 counterexample}).}
\end{table}
Using the values given in Table \ref{tab:mmt probs}, we find that
for the mixture $\mathbb{M}_{xz}(1/2)$, in which measurements $\mathbb{M}_{x}$
and $\mathbb{M}_{z}$ are performed with equal probability, outcome
one is obtaining with probability
\begin{equation}
\frac{1}{2}g\left(x_{+}\right)+\frac{1}{2}g\left(z_{+}\right)=\frac{1}{4}\,,\label{eq:  equal x and z mixture}
\end{equation}
while the first outcome of $\mathbb{M}_{rs}(p_{+})$ occurs with probability
\begin{equation}
\frac{1}{2}\left(1+\frac{1}{\sqrt{3}}\right)g\left(r_{+}\right)+\frac{1}{2}\left(1-\frac{1}{\sqrt{3}}\right)g\left(s_{+}\right)=\frac{1}{2}\,.
\end{equation}
Not surprisingly, different mixtures of different projective measurements,
which correspond to unassociated processes with well-defined outcome
probabilities, may result in different outcome probabilities. 

According to quantum theory, however, the two mixtures just considered
necessarily give rise to the same outcome probabilities for any qubit
state and thus may both be represented by the same pair of effects,
namely

\begin{equation}
\frac{1}{2}\left(\mathbb{M}_{x}+\mathbb{M}_{z}\right)=p_{+}\mathbb{M}_{r}+p_{-}\mathbb{M}_{s}=\left\llbracket m,\I-m\right\rrbracket \equiv\mathbb{D}_{m}\label{eq: identity}
\end{equation}
with the effect
\begin{equation}
m=\frac{1}{2}\left(\I+\frac{1}{2}\left(\sigma_{x}+\sigma_{z}\right)\right)\,,\label{eq:mixed effect}
\end{equation}
as illustrated in Fig. \ref{fig:mixture of mmts}. 
\begin{figure}
\centering{}\includegraphics[scale=0.9]{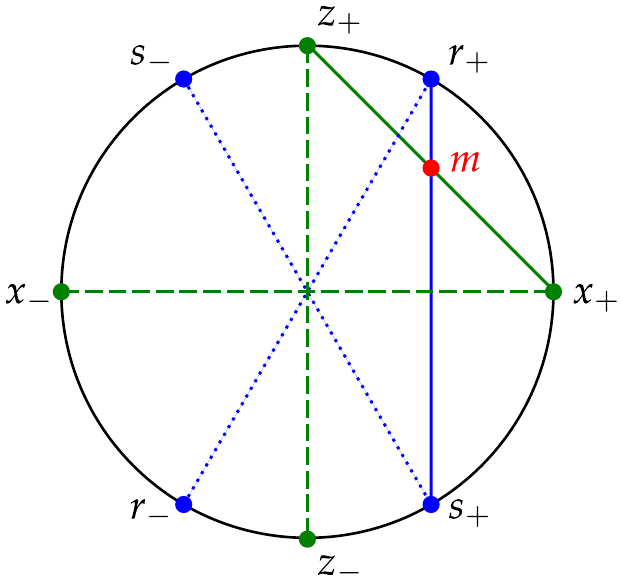}\caption{\label{fig:mixture of mmts}Example of an effect which represents
an outcome stemming from two different mixing procedures: the effect
$m$ occurs as the first outcome of both $\left(\mathbb{M}_{x}+\mathbb{M}_{z}\right)/2$
and $p_{+}\mathbb{M}_{r}+p_{-}\mathbb{M}_{s}$; straight dashed (green)
and dotted (blue) lines connect the pairs of effects in the same measurement,
and the straight solid lines represent the effects which can be formed
by mixing the effects they connect. }
\end{figure}

Thus we see that to exclude $g(P)$ of (\ref{eq:gleason 2 counterexample})
as a valid frame function, it is sufficient to assume that a mixture
of projective measurements $\left\{ \mathbb{M}_{1},\mathbb{M}_{2},\ldots\right\} $,
with probabilities $\left\{ p_{1},p_{2},\ldots\right\} $, is associated
with the convex combination $\left(p_{1}\mathbb{M}_{1}+p_{2}\mathbb{M}_{2}+\ldots\right)$.
The ensuing assignment of effects from $\mathcal{E}\left(\mathbf{POM}_{d}\right)$
to represent outcomes of mixtures is our third assumption and results
in a theory with effect space $\mathcal{E}_{d}$ and measurement set
$\mathbf{PSM}_{d}$. When combining this requirement with Theorem
\ref{thm: PSM Gleason-type }, frame-function arguments become sufficiently
strong to imply Born's rule and the standard density-operator formalism
of quantum theory in the space $\mathbb{C}^{d}$.

\section{Summary and discussion\label{sec:Summary}}

This paper improves on Gleason-type theorems which aim to extend Gleason's
result to Hilbert spaces of dimension $d=2$. The goal is to recover
Born's rule and the representation of quantum states as density operators
as a product of consistent probability assignments to measurement
outcomes. Our main result, given by Theorem \ref{thm: PSM Gleason-type },
shows that any consistent assignment of probabilities to the outcomes
of projective-simulable measurements, or the measurement set $\mathbf{PSM}_{d}$,
must be associated with a density operator in the desired way. Moreover,
we show that a smaller set of measurements $\mathbf{3PSM}_{d}^{\prime}$,
defined in Eq. (\ref{eq: def of Min}), also has this property.

Our result improves upon existing Gleason-type theorems which are
based either on probability assignments to POMs with any number of
outcomes (which constitute the set $\mathbf{POM}_{d}$, see \cite{Busch2003,Caves2004})
or those with at most three outcomes (which constitute the set $\mathbf{3POM}_{d}$,
see \cite{mastersthesis}). The measurement set we consider, $\mathbf{3PSM}_{d}^{\prime}$,
is a strict subset of $\mathbf{3POM}_{d}$. Fig. \ref{fig:mmt hierarchy}
summarizes the relationship between the sets of measurements.

In addition to these quantitative improvements, Theorem \ref{thm: PSM Gleason-type }
also provides new qualitative insights. Projective-simulable measurements
are conceptually simpler than arbitrary POMs because they are just
classical mixtures of projective measurements, with an equal level
of experimental feasibility. Due to the limitation to simulable measurements,
our Gleason-type theorem resembles Gleason's original theorem more
strongly than its predecessors. Furthermore, in Sec. \ref{subsec:Conceptual}
we add an explicit assumption to the setting of Gleason's original
theorem in order to extend the result to dimension two. This assumption
consists of identifying those measurements which, whilst arising from
different mixtures, are known to be indistinguishable in ordinary
quantum theory.

Future work will show whether the subset $\mathbf{3PSM}_{2}^{\prime}$
of projective-simulable measurements, on which the proof of Theorem
\ref{thm: PSM Gleason-type } relies, is the smallest possible set
from which a Gleason-type theorem may be derived in dimension $d=2$.
We cannot exclude that the frame functions respecting $\mathbf{3PSM}_{2}^{\prime}$
are still \emph{overdetermined} in the sense that other sets not containing
all of $\mathbf{3PSM}_{2}^{\prime}$ may also entail a Gleason-type
theorem for a qubit. \\

\begin{acknowledgement*}
\end{acknowledgement*}
Paul Busch (1955-2018) agreed to look at a draft of this paper but
it was not meant to be. We dedicate this paper to the memory of our
kind colleague and wise friend. 

The authors would like to thank Leon Loveridge for helpful discussions
and comments on the manuscript. VW gratefully acknowledges funding
from the York Centre for Quantum Technologies.

\appendix

\section{An equivalent form of Gleason's theorem}

Theorem \ref{thm: Gleason } is equivalent to Gleason's original theorem.
To see this, consider a collection $\left\{ \mathcal{H}_{1},\mathcal{H}_{2},\ldots\right\} $
of mutually orthogonal, closed subspaces of $\mathcal{H}$. Then there
exists a closed subspace $\mathcal{H}^{\perp}$ orthogonal to each
$\mathcal{H}_{j}$ such that $\Span\left\{ \mathcal{H}_{1},\mathcal{H}_{2},\ldots;\mathcal{H}^{\perp}\right\} =\mathcal{H}$.
Using projectors $P_{j}$ and $P^{\perp}$ onto these subspaces, we
have, by the definition of a frame function, that
\begin{equation}
\begin{aligned}\sum_{j}f\left(P_{j}\right)+f\left(P^{\perp}\right)=1 & =f\left(P_{\Span\left\{ \mathcal{H}_{1},\mathcal{H}_{2},\ldots\right\} }\right)+f\left(P^{\perp}\right)\\
 & =f\left(\sum_{j}P_{j}\right)+f\left(P^{\perp}\right),
\end{aligned}
\label{eq:additive frame function 1}
\end{equation}
which gives 
\begin{equation}
\sum_{j}f\left(P_{j}\right)=f\left(\sum_{j}P_{j}\right).\label{eq:additive frame function 2}
\end{equation}
This relation implies that any frame function $f$ respecting $\mathbf{PVM}_{d}$
defines a measure $\mu$ on the closed subspaces $\mathcal{C}$ of
$\mathcal{H}$ given by $\mu\left(\mathcal{C}\right)=f\left(P_{\mathcal{C}}\right)$
since 
\begin{equation}
\begin{aligned}\mu\left(\Span\left\{ \mathcal{H}_{1},\mathcal{H}_{2},\ldots\right\} \right) & =f\left(P_{\Span\left\{ \mathcal{H}_{1},\mathcal{H}_{2},\ldots\right\} }\right)=\sum_{j}\mu\left(\mathcal{H}_{j}\right)\,.\end{aligned}
\label{eq:frame function defines measure}
\end{equation}
Conversely, if $\mu$ satisfies Equation (\ref{eq:measure}) and $\mu\left(\mathcal{H}\right)=1$,
then we have 
\begin{equation}
\sum_{j}\mu\left(\mathcal{H}_{j}\right)+\mu\left(\mathcal{H}^{\perp}\right)=\mu\left(\Span\left\{ \mathcal{H}_{1},\mathcal{H}_{2},\ldots;\mathcal{H}^{\perp}\right\} \right)=\mu\left(\mathcal{H}\right)=1\,,\label{eq:measure defines frame function}
\end{equation}
and hence any such measure $\mu$ defines a frame function $f$ respecting
$\mathbf{PVM}_{d}$, given by $f\left(P_{\mathcal{C}}\right)=\mu\left(\mathcal{C}\right)$.

\begin{thebibliography}{10}
\bibitem{Neumann1932}J. von Neumann: \emph{Mathematical Foundations
of Quantum Mechanics} (Princeton University Press 1955) 

\bibitem{Hardy2001a}L. Hardy: \emph{Quantum theory from five reasonable
axioms}, arXiv preprint quant-ph/0101012 308

\bibitem{Masanes2011}L. Masanes and M. M{\"u}ller: \emph{A derivation
of quantum theory from physical requirements}, New J. Phys. \textbf{13}
(2011) 063001

\bibitem{Gleason1957}A. M. Gleason: \emph{Measures on the closed
subspaces of a Hilbert space}, Indiana Univ. Math. J. \textbf{6} (1957)
885

\bibitem{Busch2003}P. Busch: \emph{Quantum states and generalized
observables: a simple proof of Gleason's theorem}, Phys. Rev. Lett.
\textbf{91 }(2003) 120403

\bibitem{Caves2004}C. M. Caves, C. A. Fuchs, K. K. Manne and J. M.
Renes: \emph{Gleason-type derivations of the quantum probability rule
for generalized measurements}, Found. Phys. \textbf{34} (2004) 193

\bibitem{mastersthesis}H. Granstr{\"o}m: \emph{Gleason's theorem}, (Master's
thesis, Stockholm University 2006)

\bibitem{Oszmaniec2017}M. Oszmaniec, L. Guerini, P. Wittek and A.
Acin: \emph{Simulating positive-operator-valued measures with projective
measurements}, Phys. Rev. Lett. \textbf{119} (2017) 190501

\bibitem{kraus1983states}K. Kraus: \emph{States, Effects and Operations:
Fundamental Notions of Quantum Theory} (Springer 1983) 

\bibitem{Bell1966}J. S. Bell: \emph{On the problem on hidden variables
in quantum mechanics}, Rev. Mod. Phys. \textbf{38} (1966) 447

\bibitem{Hall2016}M. J. W. Hall: \emph{Comment on \textquotedbl Gleason-Type
Theorem for Projective Measurements, Including Qubits}'' \emph{by
F. De Zela}, arXiv preprint arXiv:1611.00613

\bibitem{vNBirkhofflogic1936}G. Birkhoff and J. V. Neumann: \emph{The
Logic of Quantum Mechanics}, Ann. Math. \textbf{37} (1936) 823

\bibitem{mackey1957}G. W. Mackey: \emph{Quantum Mechanics and Hilbert
Space}, Am. Math. Mon. \textbf{64} (1957) 45

\bibitem{lahti1995partial}P. J. Lahti and M. J. Maczy\'{n}ski: \emph{Partial
order of quantum effects}, J. Math. Phys. \textbf{36} (1995) 1673
\end{thebibliography}
\end{document}